    \newtheorem{theorem}{Theorem}
    \newtheorem{corollary}{Corollary}
	\newcommand{\refereecomment}[2]{\label{#2}\marginpar{\fbox{\texttt{#1}}}}
    \newcommand{\killpunct}[1]{}
\title{Structural identifiability analysis of linear reaction-advection-diffusion processes in mathematical biology}
\date{\today}
	\author[1]{Alexander P Browning}
    \author[1,2]{Maria Ta\cb{s}c\u{a}}
    \author[1]{Carles Falc\'o}
	\author[1]{Ruth E Baker}
	\affil[1]{Mathematical Institute, University of Oxford, Oxford, United Kingdom}
	\affil[2]{Somerville College, University of Oxford, Oxford, United Kingdom}
\begin{document}

\maketitle

	\vfill
	\renewcommand{\abstractname}{Abstract}
	\begin{abstract}
		\noindent
		Effective application of mathematical models to interpret biological data and make accurate predictions often requires that model parameters are identifiable. Approaches to assess the so-called structural identifiability of models are well-established for ordinary differential equation models, yet there are no commonly adopted approaches that can be applied to assess the structural identifiability of the partial differential equation (PDE) models that are requisite to capture spatial features inherent to many phenomena. The differential algebra approach to structural identifiability has recently been demonstrated to be applicable to several specific PDE models. In this brief article, we present general methodology for performing structural identifiability analysis on partially observed reaction-advection-diffusion (RAD) PDE models that are linear in the unobserved quantities. We show that the differential algebra approach can always, in theory, be applied to such models. Moreover, despite the perceived complexity introduced by the addition of advection and diffusion terms, identifiability of spatial analogues of non-spatial models cannot decrease in structural identifiability. We conclude by discussing future possibilities and the computational cost of performing structural identifiability analysis on more general PDE models.
	\end{abstract}
	\vfill

	\renewcommand{\abstractname}{Keywords}
	\begin{abstract}
		\noindent 
		\centering
		structural identifiability, reaction-diffusion, advection, cell migration, \\chemotaxis, partial differential equations
	\end{abstract}
	\vspace{1cm}
	\vfill

\section{Introduction}

Mathematical models play an irreplaceable role in the interpretation of biological data. Model parameters are now routinely used to objectively quantify observed behaviour and characterise behaviours that cannot be directly measured \cite{Liepe:2014,Gabor:2015dr}. The question of whether it is possible, in theory, for model parameters to be uniquely identified given a specified mathematical model and a specified set of observed quantities, is referred to as \textit{structural identifiability} \cite{Bellman.1970,Walter:1981,Walter:1987,Raue:2009,Chis:2011}. Assessing the structural identifiability of a model and observation process can provide vital insights that guide data collection before experiments have been conducted \cite{Villaverde:2016sg}; establish identifiable parameter combinations that resolve issues of non-identifiability either through reparameterisation or additional experimentation \cite{Meshkat.2009}; and provide confidence in predictions drawn from calibrated mathematical models \cite{Eisenberg.2017}.

Tools for assessing structural identifiability are well established for deterministic, ordinary differential equation (ODE) models \cite{Ljung:1994,Bellu:2007,Raue:2014,Barreiro.2023hg}. The advent of open-source and even web-based software \cite{Hong:2019uq} to automate otherwise the tedious analysis has ingrained questions related to structural identifiability into the inference process for ODE models \cite{Barreiro.2023hg}. Methods originate with differential-algebra-based approaches \cite{Meshkat:2014}, which for linear or polynomial systems assess identifiability through a so-called input-output relation: a set of monic polynomials in the derivatives of observed variables, the coefficients of which form the set of identifiable parameter combinations. Differential algebra approaches are trivial if all variables in a system are observed, although questions of structural identifiability very often relate to partially observed systems, where only measurements of a subset of state variables are available \cite{Bellman.1970,Villaverde:2016sg,Wieland.2021}. For such \textit{partially observed} systems, the algorithmic complexity of differential algebra approaches comes from the reduction of a high-dimensional (in the number of state variables), lower-derivative-order, system to a set of higher-derivative-order polynomials that include only the observed quantities \cite{Ljung:1994}. Several alternative approaches for structural identifiability have been since been established; notably those based on Taylor series, generating series and Lie derivatives \cite{Chis:2011,Ligon.2017qw}, and similarity transforms \cite{Vajda:1989,Chis:2011}. Many of these more modern approaches are more broadly applicable to analytic systems, and significantly more computationally efficient than those based on differential algebra.

However, many forms of biological data are inherently spatial, and therefore not well-described by ODE models \cite{Murray:2002}: data relating to cell migration \cite{Simpson:2020,Liu.2023} or diffusive processes \cite{Kicheva.2007da,Romanova.2022sd}, for example. Yet, tools for assessing the structural identifiability of the partial differential equation (PDE) models that capture spatial heterogeneity remain almost entirely undeveloped. Recent work has demonstrated the application of the differential algebra method to assess structural identifiability of a class of first-order age structured PDE models \cite{Renardy.2022}, and a system comprising a single diffusive species \cite{Ciocanel.2023}. Significantly, Renardy et al. \cite{Renardy.2022} demonstrate that the differential-algebra approach may be applicable more generally to systems of linear first-order PDE models. However, structural identifiability of more generic systems that contain both first and second order spatial derivatives (hereafter referred to as reaction-advection-diffusion (RAD) systems) have not, to the best of our knowledge, been explored. In particular, it has remained unclear whether the differential algebra approach can always be applied to PDE models, or whether such analysis of PDE models is more restrictive. Even more basic questions, including whether the apparent additional complexity of including a diffusion term exacerbates or alleviates issues relating to structural identifiability, remain unanswered. Given the prevalence of spatial data in biology, in no small part due to advancements in microscopy and imaging technologies, it is imperative that tools for structural identifiability are developed for the PDE models that account for inherently spatial behaviour.

In this brief article, we present a methodology and a series of general results for the structural identifiability of linear RAD PDE models subject to partial observation. We also demonstrate that our methodology can be applied to any non-linear RAD PDE model that is linear in the unobserved quantities: we refer to such models as semi-linear. For a more thorough and formal introduction to structural identifiability and the differential algebra approach, we direct the reader to \cite{Margaria.2001,Renardy.2022}. We present our results through example, first presenting a didactic guide to the procedure for a two-state model of cell proliferation and migration inspired by common scratch assay experiments (\cref{fig1}) \cite{Simpson:2020}. Next, in \Cref{generic_two_state}, we present more general methodology and general results for a generic two-state RAD model; these results are then compared to the corresponding ODE model before the role of the initial condition---which differs significantly to the ODE case---is discussed in \Cref{role_of_ic}. In \Cref{generic_linear_systems} we formally present results for generic linear systems comprising an arbitrary number of states. Finally, in \Cref{semi-linear-systems} we present an example set of results for two semi-linear systems: a model of bacterial chemotaxis, and a three-state nonlinear analogue of the scratch assay model. Avenues for future work and key features that distinguish the structural identifiability problem for ODE models from those of PDE models are discussed in \Cref{conclusion}. 

	\begin{figure}
		\centering
		\includegraphics{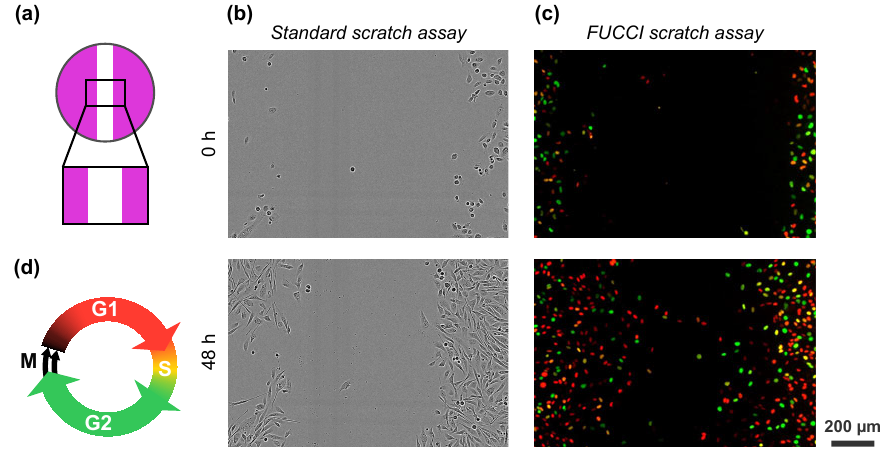}
		\caption[Fig 1]{\textbf{Partially observed spatial data in a scratch assay experiment.} (a) \textit{Scratch assays} involve growing a monolayer of cells (purple) in a well before making an artificial \textit{scratch} or \textit{wound} (white) and imaging a central region. (b) Snapshots collected at \SI{0}{\hour} and \SI{48}{\hour} from a typical scratch assay experiment conducted using PC3 prostate cancer cells \cite{Browning:2020}. Averaging in the direction of the scratch provides one-dimensional spatio-temporal information relating to the total cell density. (c) Scratch assay experiment with cell cycle information conducted using WM983C melanoma cells \cite{Carr.2021}. Encoding cells with FUCCI technology allows additional visualisation of the cell cycle \cite{Sakaue.2008}, shown schematically in (d). FUCCI scratch assay experiments provide spatio-temporal information relating to the density of cells in each stage of the cell cycle. Panels in (b) and (c) are reprinted from \cite{Browning:2020} and \cite{Carr.2021}, respectively, under a CC-BY and CC-attribution licence.}
		\label{fig1}	
	\end{figure}

\section{Methods and Results}

\subsection{Linear cell-cycle model}\label{two-state-cell-cycle}

	To demonstrate the differential algebra approach for linear PDE models, we first consider a two-state cell cycle model of cell migration subject to exponential growth \cite{Vittadello:2018,Simpson:2020}. We divide the cell cycle (\cref{fig1}d) into two subspecies that approximately correspond to the fluorescent markers in a FUCCI \cite{Sakaue.2008} scratch assay (\cref{fig1}c): cells in G1 phase fluoresce red, and cells in S/G2/M phase fluoresce green \cite{Simpson:2020}. During M-phase (mitosis), a green cell proliferates into two daughter cells that eventually fluoresce red. Finally, we model cell motility using linear diffusion. The model is given by
		\begin{subequations}\label{eq-cellcycle-lin1}%
		\begin{align}
			\pdv{r}{t} &= D_1\pdv[2]{r}{x} - k_1 r + 2k_2 g,\\
			\pdv{g}{t} &= D_2\pdv[2]{g}{x} + k_1 r - k_2 g,\label{eq-cellcycle-lin1-g}
		\end{align}
		\end{subequations}
	for $t > 0$ and $x \in [0,L]$, and where $r(x,t)$ and $g(x,t)$ denote the density of cells in G1 (red) and G2 (green), respectively; $D_1$ and $D_2$ are the diffusion coefficients associated with each subspecies; $k_1$ is the G1 (red) to G2 (green) transition rate; and $k_2$ the mitotic rate of green cells. For compactness, from this point on we denote derivatives using superscripts: $r^{(i,j)} = \partial^{i+j}r/\partial x^i \partial t^j$ and denote the undifferentiated variables interchangeably without $r^{(0,0)} \equiv r$. We prescribe homogeneous Neumann (no-flux) boundary conditions on a domain $x \in [0,L]$, such that no additional information is available from the equations that govern the behaviour at the boundaries. The initial condition is assumed to be general: $r(x,0) = r_0(x)$ and $g(x,0) = g_0(x)$. 

	While the model is motivated by FUCCI experiments that provide information on the cell-cycle status of individual cells (\cref{fig1}b), more typical experimental data is of the form 
		\begin{equation}\label{eq-lin-obs-n}
			n(x,t) = r(x,t) + g(x,t),
		\end{equation}
	as shown in \cref{fig1}c \cite{Johnston:2015}. A key question of interest is: can the model parameters, including the transition rates between cell-cycle states that are not individually observed, be inferred from an observation process of the form given by \cref{eq-lin-obs-n}; i.e., by measuring only the total cell density? 
	
	Application of the differential algebra approach to assess this question of structural identifiability requires first that we write the system \cref{eq-cellcycle-lin1} as a differential-algebraic equation in terms only of the variable that we observe, $n$, and its derivatives. To do so, we first write 
			\begin{equation}\label{eq-cellcycle-lin1-n}
				n^{(0,1)} = D_1 n^{(2,0)} + (D_2 - D_1)g^{(2,0)} + k_2 g^{(0,0)},
			\end{equation}
		to eliminate $r$. At this point, it is clear that the analysis is far simpler in the case that $D_2 = D_1 = D$: all that remains in this case is to solve \cref{eq-cellcycle-lin1-n} for $g$ in terms of $n^{(0,1)}$ and $n^{(2,0)}$, and substitute into \cref{eq-cellcycle-lin1-g}. For distinct diffusivities, progress is made by solving \cref{eq-cellcycle-lin1-n,eq-cellcycle-lin1-g} simultaneously for $g^{(0,1)}$ and $g^{(2,0)}$ in terms of $n$ and its derivatives to obtain
			\begin{subequations}
			\begin{align}
				g^{(0,1)} 	  &= k_1 n^{(0,0)} + \dfrac{\big[D_2(k_1 + 2k_2) - D_1(k_1 + k_2)\big]g^{(0,0)} - D_2 n^{(0,1)} + D_1D_2 n^{(2,0)}}{D_1 - D_2},\label{eq-cellcycle-lin1-gt}\\
				g^{(2,0)} &= \dfrac{k_2 g^{(0,0)} - n^{(0,1)} + D_1 n^{(2,0)}}{D_1 - D_2}.\label{eq-cellcycle-lin1-gxx}
			\end{align}
			\end{subequations}
	This step is advantageous as yields a first order equation, giving greater flexibility in subsequent equations that can be produced up to a specified order. Importantly, we note both the previous and all following steps may be carried out using only multiplication and elimination to avoid division by zero in determining the requisite first-order equation.
	
	Next, we differentiate \cref{eq-cellcycle-lin1-gt,eq-cellcycle-lin1-gxx} with respect to $x$ twice and $t$, respectively, to obtain two expressions for $g^{(2,1)}$ that can be equated to give, after some simplification,
		\begin{equation}\label{rd-lin-inputoutput}
			0 = -k_1k_2 n^{(0,0)} + (k_1 + k_2) n^{(0,1)} + n^{(0,2)} - (D_2 k_1 + D_1 k_2)n^{(2,0)} - (D_1 + D_2) n^{(2,1)}+ D_1 D_2 n^{(4,0)}.
		\end{equation}

	At this point, many authors divide through by the coefficient of the ``highest-order'' term, following some ordering, to ensure that the resultant polynomial system is monic, and therefore unique. A pathological example that illustrates why we must do this is to consider that we could otherwise multiply \cref{rd-lin-inputoutput} through by an arbitrarily chosen parameter. This would give the impression that the chosen parameter is identifiable through the coefficient of $n^{(0,2)}$. We take a more practical approach to ensure uniqueness and divide the resultant polynomial through by an arbitrarily chosen coefficient. For \cref{rd-lin-inputoutput}, we trivially divide through by the coefficient of $n^{(0,2)}$. Thus, the following set of polynomial coefficients (multiplied through by $-1$ where appropriate), are structurally identifiable
		\begin{equation}
			\Big\{k_1k_2,k_1+k_2,D_2k_1+D_1k_2,D_1+D_2,D_1D_2\Big\}.
		\end{equation}
	Clearly, since both the product $k_1k_2$ and sum $k_1 + k_2$ are structurally identifiable, so too are the individual parameters $k_1$ and $k_2$. A similar observation can be made for the diffusivities, and so all model parameters are structurally identifiable. As we expect, removing spatial information (for example, by initialising the system with a spatially homogeneous initial condition such that $n^{(i,0)} = 0$ for all $i = 1,2,\ldots$), the polynomial system reduces to terms that do not include $D_1$ nor $D_2$, however the rate constants $k_1$ and $k_2$ remain identifiable.

\subsection{Generic two-state reaction-advection-diffusion system}\label{generic_two_state}
	
	Having established the general procedure for applying the differential algebra framework to reaction-diffusion models, we now consider the general two-state model
	\begin{subequations}\label{generic_rad}
	\begin{align}
		u^{(0,1)} &= D_u u^{(2,0)} + \alpha_u u^{(1,0)} + p_1 u + p_2 v + p_3,\\
		v^{(0,1)} &= D_v v^{(2,0)} + \alpha_v v^{(1,0)} + p_4 u + p_5 v + p_6\label{generic_rad_vt},
	\end{align}
	\end{subequations}
	for $t > 0$ and 	subject to observations of the form $n(x,t) = u(x,t) + v(x,t)$. The initial conditions are fully prescribed by $n(x,0) = n_0(x)$ and $v(x,0) = v_0(x)$. We do not consider a domain or boundary conditions, only assuming that no information is available through observation of the system at the boundaries.

	Analysis of such a system captures both extensions to the linear cell-cycle model that incorporates advection, and canonical models analysed in the structural identifiability literature such as the so-called ``two-pool model'' \cite{Cobelli:1980}. The following results are trivially applicable to other linear combinations of the states (for example, observations of the form $\alpha u(x,t) + \beta v(x,t)$ where $\alpha$ and $\beta$ are unknown parameters) through a rescaling.
	
	The procedure for determining the required polynomial equation is now complicated by the presence of first order spatial derivatives $v^{(1,0)}$ in the system. To proceed, we first eliminate $u$ to obtain a two-state system in $n$ and $v$, with
		\begin{equation}\label{generic_rad_n}
		\begin{aligned}
			n^{(0,1)} &= D_u n^{(2,0)} + (D_v - D_u) v^{(2,0)} + \alpha_u n^{(1,0)} + (\alpha_v - \alpha_u) v^{(1,0)}\\&\qquad+ (p_1 + p_4)n^{(0,0)} + (p_2 + p_5 - p_1 - p_4) v^{(0,0)} + p_3 + p_6.
		\end{aligned}
		\end{equation}
	We proceed by solving \cref{generic_rad_vt,generic_rad_n} for $\left(v^{(2,0)},v^{(0,1)}\right)$ so as to obtain a system of one first order and one second order equation. We can then expand the resultant system by taking appropriate derivatives up to $i$th order to end up with a linear system with more equations than unknown variables. In this case, this is achieved by considering derivatives up to order $i = 4$, which yields 16 equations in all 15 possible fourth order derivatives of the unobserved variable $v$. The resultant system will \textit{always} be linear for all linear and what we term ``semi-linear'' systems, the latter defined as systems that are linear in the variables we wish to eliminate (a logistic term of the form $v(1 - n)$ would be semi-linear, since $n$ is observed). 
	
	Thus, we arrive at the over-determined system
		\begin{equation}
			\mathbf{A} \mathbf{v}^{(4)} = \mathbf{b}, 	
		\end{equation}
	where $\mathbf{v}^{(i)} \in \mathbb{R}^{(i+1)(i+2)/2}$ denotes a vector of all possible $i$th order derivatives. We then proceed by performing Gaussian elimination to reduce the augmented matrix $(\mathbf{A} | \mathbf{b})$ into row echelon form. This provides a set of expressions, based upon linear combinations of the elements of $\mathbf{b}$, that necessarily vanish; thus providing the required set of polynomial equations.
	
	{It is not necessarily the case that we must work with the full set of fourth order equations,  just that we are required to do so to ensure that the system is appropriately determined. This is important as it is possible for the system to be characterised by more than one polynomial equation: the number of polynomial equations we require for linear and semi-linear problems will be given by the difference in rank of $\mathbf{A}$ and the rank of the augmented matrix $(\mathbf{A} | \mathbf{b})$ from the over-determined system. For the two-state RAD system, we thus require a single polynomial equation. It happens in our case that we are able to obtain this single polynomial equation from a subset of third order equations. Algebraic manipulations are performed in \texttt{Mathematica} \cite{Mathematica} and available as supplementary code with algorithmic details provided in \Cref{appalg}.}
	
	For the system defined by \cref{generic_rad}, we arrive at the set of polynomial coefficients 
	\begin{equation}\label{generic_rad_ident_combo}
	\begin{aligned}
		\Big\{p_1p_5 - &p_2p_4,p_1 + p_5,p_5 \alpha_u + p_1 \alpha_v,\alpha_u + \alpha_v,D_v p_1 + D_u p_5 + \alpha_u \alpha_v, \\& -D_u - D_v,D_v\alpha_u + D_u \alpha_v,D_uD_v,(p_5-p_4)p_3+(p_1 - p_2)p_6\Big\}.
	\end{aligned}
	\end{equation}

	{eq10}\Cref{generic_rad_ident_combo} provides an exhaustive set of identifiable parameter combinations. We then reduce this set further to establish a fully-reduced set of identifiable parameters and parameter combimnations. For example, denoting two combinations by $c_6 = D_u + D_v$ and $c_8 = D_u D_v$ we can solve for the parameters $D_u$ and $D_v$ in terms of the identifiable parameter combinations $c_6$ and $c_8$, thus establishing that $D_u$ and $D_v$ are identifiable. For \cref{generic_rad}, the reduced set of identifiable parameter combinations is given by
		\begin{equation}\label{generic_rad_idcombo}
		\begin{aligned}
			\Big\{D_u,D_v,\alpha_u,\alpha_v,p_1,p_5,p_2p_4,(p_5-p_4)p_3+(p_1 - p_2)p_6\Big\}.
		\end{aligned}
		\end{equation}

	By setting $\alpha_u = \alpha_v = 0$ in \cref{generic_rad_ident_combo}, we see that these results are identical to the no-advection model. However, results differ slightly for the non-spatial model. Setting $D_u = D_v = \alpha_u = \alpha_v = 0$ to remove spatial derivatives, we see that the reduced set of identifiable parameter combinations is now
		\begin{equation}
		\begin{aligned}
			\Big\{p_1p_5 - p_2p_4,p_1 + p_5,(p_5-p_4)p_3+(p_1 - p_2)p_6\Big\},
		\end{aligned}
		\end{equation}
	so we can no longer identify $\textit{any}$ rate parameters individually. This result demonstrates that we are potentially able to learn more about the process by introducing spatial heterogeneity into the system.

	\subsection{Role of the initial condition in identifiability}\label{role_of_ic}
	
	Clearly, if we observe only $n(x,t) = u(x,t) + v(x,t)$ we cannot always fully ascertain the initial condition $v_0(x)$. A key point of distinction between such partially observed PDE models and their ODE counterparts is that the initial condition is an unknown $\textit{function}$ whereas for an ODE model the initial condition is merely an additional unknown scalar parameter. In the case of structural identifiability, we assume that we perfectly observe not only $n(x,0) = n_0(x)$ but also its derivatives. Thus, we can consider that the right-hand-side of \cref{generic_rad_n} is ``observed'' such that the functional 
		\begin{equation}\label{generic_rad_poissonic}
		\begin{aligned}
			f(x) := n_0^{(0,1)}(x) &= D_u n_0^{(2,0)}(x) + (D_v - D_u) v_0^{(2,0)}(x) + \alpha_u n_0^{(1,0)}(x) + (\alpha_v - \alpha_u) v_0^{(1,0)}(x)\\&\qquad+ (p_1 + p_4)n(x) + (p_2 + p_5 - p_1 - p_4) v_0(x) + p_3 + p_6,
		\end{aligned}
		\end{equation}
	is also ``identifiable'' in a similar sense to the other identifiable parameter combinations in \cref{generic_rad_idcombo}. This result shows a clear interdependence between model parameters and the initial condition: specifiying a functional form for the initial condition has a similar role to fixing a non-identifiable parameter that appears in combination with other model parameters. Thus, it is possible for a practitioner to unknowingly render structurally non-identifiable parameters identifiable through an assumption related to the initial condition, highlighting the importance of considering structural identifiability when working with partially observed PDE models.

	\begin{figure}
		\centering
		\includegraphics[width=\textwidth]{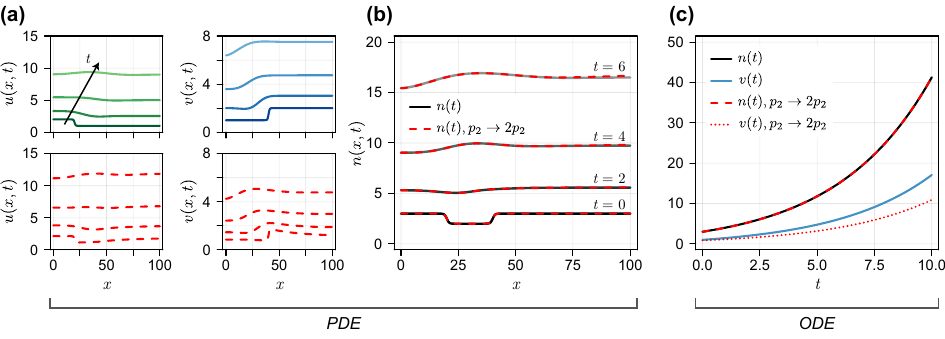}
		\caption[Fig 2]{\textbf{Non-identifiability of the linear RAD model.} (a) Individual (unobserved) states, $u(x,t)$ and $v(x,t)$, at $t = 0$, 2, 4 and 6. The arrow indicates the direction of increasing $t$. Shown are curves for the original set of parameter values (green and blue solid curves, top row) and curves for the modified set of parameter values where $p_2 \mapsto 2p_2$ (red dashed, bottom row). Note the difference in initial condition between parameter sets. (b) Observed variable, $n(x,t) = u(x,t) + v(x,t)$ at the original set of parameter values (grey) and modified set of parameter values (red dashed). Note that the parameter sets are indistinguishable. (c) Equivalent dynamics for the ODE model, showing $n(t)$ and $v(t)$ at the original parameter set (solid) and modified parameter set (dashed). Original parameters are $D_u = 20$, $D_v = 10$, $\alpha_1 = -5$, $\alpha_2 = 5$, $p_i = 0.1$ $(i \neq 2)$ and $p_2 = 0.2$.}
		\label{fig2}	
	\end{figure}
	
	We illustrate these results in \cref{fig2} by noting that \cref{generic_rad_idcombo} gives combinations of parameters that can produce identical model outputs. For example, we can set $p_2 \mapsto 2p_2$ and see no change to $n(x,t)$ provided other parameters and the initial condition are adjusted accordingly. Solving the model with an initial parameter set and initial condition for $n(x,0)$ provides $f(x)$ to determine the new initial condition for $v(x,t)$ in terms of \cref{generic_rad_poissonic}, a steady-state reaction-advection equation (\cref{fig2}a). Solving the full system again (\cref{fig2}b) demonstrates structural non-identifiability: model outputs are indistinguishable. We show the equivalent set of results for the corresponding ODE model in \cref{fig2}c.

\subsection{Generic linear systems}\label{generic_linear_systems}
		
	A question naturally arising from our analysis of the generic two-state system is whether the procedure is applicable to linear systems with an arbitrary number of states and outputs; a second is whether the \textit{reduction} in the number of identifiable parameter combinations when reducing to a corresponding ODE model is generally true of RAD models.
	
	To determine whether it is, in theory, possible to find a polynomial system for all $m$-state linear RAD models subject to observations of $\ell$ variables, we need only consider whether it is possible to \textit{close} such a system through repeated differentiation. That is, is there an order of derivative at which we are guaranteed that we will have more determining equations than variables? We arrive at the following result.
	
	\begin{theorem}\label{theorem1}
		All linear RAD models of $m$ states can be reduced to a set of polynomial relations involving derivatives of order no more than $4(m-1)$.
	\end{theorem}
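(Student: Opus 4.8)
The plan is to recast the entire elimination as a single determinant computation in the commutative ring of constant-coefficient differential operators, which makes both the existence of a closing relation and the order bound transparent. First I would write the system as $\boldsymbol{\Phi}(\partial_x,\partial_t)\,\mathbf{w} = \mathbf{q}$, where $\mathbf{w} = (w_1,\dots,w_m)^{\!\top}$ collects the states, $\mathbf{q}$ is the constant source vector, and $\boldsymbol{\Phi}$ is the $m\times m$ matrix with diagonal entries $\Phi_{ii} = \partial_t - D_i\partial_x^2 - \alpha_i\partial_x - P_{ii}$ and off-diagonal entries $\Phi_{ij} = -P_{ij}$. Because all coefficients are constant, $\partial_x$ and $\partial_t$ commute with each other and with scalars, so every entry of $\boldsymbol{\Phi}$ lies in the commutative polynomial ring $k[\partial_x,\partial_t]$ over the field $k$ generated by the parameters. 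This is the single structural fact that powers everything: over a commutative ring the adjugate identity $\operatorname{adj}(\boldsymbol{\Phi})\,\boldsymbol{\Phi} = \det(\boldsymbol{\Phi})\,\mathbf{I}$ holds verbatim.

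Next I would left-multiply the system by $\operatorname{adj}(\boldsymbol{\Phi})$ to obtain $\det(\boldsymbol{\Phi})\,\mathbf{w} = \operatorname{adj}(\boldsymbol{\Phi})\,\mathbf{q}$. Since $\mathbf{q}$ is constant, every derivative term on the right annihilates and the right-hand side collapses to a constant vector. Contracting with the observation vector $\mathbf{c}$ (so that $n = \mathbf{c}^{\!\top}\mathbf{w}$) yields the scalar relation $\det(\boldsymbol{\Phi})\,n = \mathbf{c}^{\!\top}\operatorname{adj}(\boldsymbol{\Phi})\,\mathbf{q} =: \kappa$, a single constant. This is exactly an input--output relation of the required form: a polynomial in the derivatives of $n$ whose coefficients are polynomials in the parameters (the identifiable combinations) together with a constant encoding the source terms. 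I would verify on \cref{generic_rad} that this recovers \cref{generic_rad_ident_combo} precisely---the eight homogeneous coefficients of $\det(\boldsymbol{\Phi})$ together with $\kappa$---which both sanity-checks the construction and shows it does more than establish existence.

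The order bound is then immediate. Each diagonal entry of $\boldsymbol{\Phi}$ has total differential order $2$ (from the diffusion term; advection and time terms are lower) and each off-diagonal entry has order $0$. In the Leibniz expansion of $\det(\boldsymbol{\Phi})$ every term is a product of $m$ entries, one per row and column, so its total order is at most $2m$, attained by the all-diagonal product $\big(\prod_i(-D_i)\big)\partial_x^{2m}$. Hence $\det(\boldsymbol{\Phi})\,n = \kappa$ involves derivatives of order at most $2m$, and since $2m \le 4(m-1)$ for every $m\ge 2$ the stated bound follows with room to spare; applying the same computation to each observed combination handles several outputs. For nontriviality, the leading term $\prod_i(-D_i)\partial_x^{2m}$ shows $\det(\boldsymbol{\Phi})$ is a nonzero element of the integral domain $k[\partial_x,\partial_t]$, so the relation genuinely constrains $n$, and the whole derivation uses only operator multiplication and addition---never division---so no spurious assumption that a parameter is nonzero is introduced.

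The main obstacle is reconciling this clean statement with the constructive over-determined-linear-system procedure used throughout the paper, and in particular explaining why the direct differentiate-and-eliminate route may need a higher working order than the $2m$ governing the final relation. The honest difficulty is the bookkeeping for the coupled system: unlike the two-state case, with $m-1$ unobserved variables and only $m$ equations one cannot simultaneously reduce every unobserved variable to a purely first/second-order form, so a naive count of equations of order $\le i$ against the $(m-1)\binom{i+2}{2}$ derivatives of the unobserved states closes only at a larger order. I would therefore present the determinant relation as the object whose realisation is the solvability (rank) condition of the augmented system $(\mathbf{A}\mid\mathbf{b})$, and argue that differentiating the reduced equations up to order $4(m-1)$ always suffices to expose it; quantifying this closure order under the coupling---rather than the sharper order $2m$ of the eliminant itself---is where the careful counting has to be done.
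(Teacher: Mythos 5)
Your proof is correct, but it follows a genuinely different route from the paper's. The paper's argument is purely combinatorial: after rewriting the system so that one equation is first order in the unobserved quantities, it counts the number of equations $q(n)$ obtainable by differentiating up to order $n$ against the number $v(n)$ of unknown derivatives of the $m-1$ unobserved states, and shows that the smallest $n$ with $q(n)\ge v(n)$ is $n^*=4(m-1)$; this calibrates the Gaussian-elimination algorithm directly (how far to differentiate, and, via the rank defect of $(\mathbf{A}\mid\mathbf{b})$, how many relations to expect). Your adjugate construction instead exhibits the eliminant explicitly and yields the sharper order bound $2m\le 4(m-1)$, together with closed-form coefficients that, as you correctly observe, reproduce \cref{generic_rad_ident_combo}; it also sidesteps the symbolic elimination that the paper reports as computationally prohibitive already at $m=3$. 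The trade-offs are these: (i) your argument leans on the commutativity of constant-coefficient operators, so unlike the counting/elimination argument it does not transfer to the semi-linear systems for which the paper explicitly claims \Cref{theorem1} continues to hold (there the coefficients involve the observed variable and no longer commute with $\partial_x$ and $\partial_t$); (ii) $\det(\boldsymbol{\Phi})\,n=\kappa$ is one element of the annihilator, and over the two-variable ring $k[\partial_x,\partial_t]$ that ideal need not be principal, so your single relation per output establishes the existence claim of the theorem but does not by itself certify that its coefficients exhaust the identifiable combinations, which is what the paper's rank-difference count is designed to guarantee; and (iii) your nontriviality argument via the leading term $\prod_i(-D_i)\partial_x^{2m}$ should fall back on the coefficient of $\partial_t^m$ (which is $1$) when some $D_i=0$, e.g.\ in the ODE limit. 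None of these points affects the correctness of your proof of the statement as written, and your closing paragraph correctly identifies the remaining gap between the order of the eliminant and the working order needed to close the over-determined system.
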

	\begin{proof}
		We can rewrite the system of $m$ equations to include at least one first-order equation. Expanding the system to include derivatives up to order $n$ yields at least $q(n) = (mn^2 - mn + 2n) / 2$ equations, with no more than $v(n) = (n+1)(n+2)(m-1)/2$ unknown variables (i.e., $n$th order and lower derivatives of the unobserved quantities). The number of excess variables is no more than $d(n) = q(n) - v(n)$, which we require to be positive. Defining $n^* = \min n: d(n) \ge 0, n \in \mathbb{Z}$, it is straightforward to see that $n^* = 4(m-1)$, as required.
	\end{proof}
	
	\Cref{theorem1} highlights the potential computational cost of performing structural identifiability for PDE models with large numbers of states: application of our method for a linear system of only three states would potentially require us to perform Gaussian elimination symbolically on a 92 equation system. While analysis of the two-state system required a walltime of approximately \SI{1}{\second}, attempting to analyse a corresponding generic three-state model subject to a single observation did not yield results within a \SI{12}{\hour} runtime (\texttt{Mathematica 13.1} \cite{Mathematica}, Apple M1 Pro with 16GB RAM).	Importantly, however, \cref{theorem1} demonstrates that it is always theoretically possible to perform structural identifiability analysis on linear RAD models. Furthermore, the Gaussian-elimination-based reduction procedure requires only that the system be linear in the unobserved quantities, thus \Cref{theorem1} will hold for all semi-linear systems, including, for example, analogues of \cref{eq-cellcycle-lin1} with logistic growth terms and observation processes that include the total population.
	
	Secondly, we observe for the linear two-state system that the PDE model contains a greater number of identifiable parameter combinations than the corresponding ODE (i.e., spatially homogeneous) model. We present the following result.
	
	\begin{corollary}\label{theorem2} Identifiable parameter combinations in a linear spatially homogeneous system are a subset of the identifiable parameter combinations in all corresponding spatially heterogeneous models subject to linear advection and/or diffusion with general unknown initial conditions.
	\end{corollary}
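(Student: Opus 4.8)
The plan is to argue entirely at the level of the input--output relations produced by the differential algebra reduction, exploiting the fact—emphasised in \Cref{two-state-cell-cycle}—that the reduction is carried out using only multiplication and elimination, so that it commutes with specialisation of the parameters. Write $P_{\mathrm{RAD}}=0$ for the (normalised) input--output relation of the spatially heterogeneous model, guaranteed to exist by \Cref{theorem1}; it is a polynomial, linear in the space--time derivatives $n^{(i,j)}$ of the observed variable, whose nonconstant coefficients are precisely the identifiable parameter combinations. The corresponding spatially homogeneous system is obtained by deleting the advection and diffusion terms, equivalently by setting $D_\bullet=\alpha_\bullet=0$; its input--output relation $P_{\mathrm{hom}}=0$ involves only the pure time derivatives $n^{(0,j)}$, and its coefficients are the identifiable combinations of the homogeneous model. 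The goal is to exhibit the latter coefficients as a distinguished subset of the former.

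First I would observe that every monomial of $P_{\mathrm{RAD}}$ containing a spatial-derivative factor $n^{(i,j)}$ with $i\ge 1$ necessarily carries a factor of a diffusion or advection parameter. This is because the reaction terms are spatially local: a spatial derivative can enter the elimination only through a term of the form $D_\bullet u^{(2,0)}$ or $\alpha_\bullet u^{(1,0)}$ in an unobserved variable $u$, and the division-free reduction never removes such a factor. Consequently the coefficient of every spatial-derivative monomial vanishes under the specialisation $D_\bullet=\alpha_\bullet=0$, as is visible in \cref{rd-lin-inputoutput} and in \cref{generic_rad_ident_combo}, where $D_uD_v$, $-(D_u+D_v)$, and the mixed terms all disappear, while the pure-time coefficients $p_1p_5-p_2p_4$, $p_1+p_5$ and $(p_5-p_4)p_3+(p_1-p_2)p_6$ survive.

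The main step is then to identify $P_{\mathrm{RAD}}\big|_{D_\bullet=\alpha_\bullet=0}$ with $P_{\mathrm{hom}}$. Since the reduction uses only multiplication and elimination, performing it and then specialising $D_\bullet=\alpha_\bullet=0$ yields the same differential polynomial as specialising first and then reducing; but the latter is, by definition, the reduction of the spatially homogeneous system, namely $P_{\mathrm{hom}}$. Choosing a common normalisation—dividing both relations through by a leading pure-time coefficient, as in the examples—makes this an identity rather than an identity up to scale, the relevant leading coefficient being non-vanishing at $D_\bullet=\alpha_\bullet=0$. By the previous paragraph the specialisation simply deletes the spatial-derivative monomials, so the surviving coefficients, those of the pure time derivatives, are simultaneously coefficients of $P_{\mathrm{RAD}}$ and equal to the coefficients of $P_{\mathrm{hom}}$. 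Every homogeneous-model identifiable combination therefore appears verbatim among the identifiable combinations of the RAD model, which is exactly the claimed inclusion; the general unknown initial conditions contribute only additional identifiable functionals (\cref{generic_rad_poissonic}) and remove none. Because \Cref{theorem1} guarantees closure for any number of states and the reduction requires only linearity in the unobserved quantities, the argument applies unchanged to general $m$-state linear, and indeed semi-linear, systems.

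The part I expect to be the main obstacle is the identification $P_{\mathrm{RAD}}\big|_{D_\bullet=\alpha_\bullet=0}=P_{\mathrm{hom}}$ at the level of the \emph{minimal} relation: one must rule out that the specialised relation is the zero polynomial (excluded because the pure-time leading coefficient used for normalisation does not involve $D_\bullet,\alpha_\bullet$) or a proper differential-polynomial multiple of $P_{\mathrm{hom}}$ of strictly higher time order (excluded because $P_{\mathrm{RAD}}$, and hence its specialisation, is linear in the derivatives and the elimination returns the lowest-order relation). Making ``reduction commutes with specialisation'' fully rigorous—that no pivot used in the Gaussian elimination degenerates as $D_\bullet,\alpha_\bullet\to 0$—is the only genuinely delicate point, and it is precisely the division-free nature of the reduction that secures it.
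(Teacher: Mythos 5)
Your proposal follows essentially the same route as the paper's proof: invoke \Cref{theorem1} to guarantee the polynomial input--output relation for the spatially heterogeneous system, then recover the homogeneous model's identifiable combinations by making the spatial terms vanish, so that $\mathcal{Q}^*\subseteq\mathcal{Q}$. The paper's argument is a two-sentence sketch of exactly this; your extra care about the division-free reduction commuting with the specialisation $D_\bullet=\alpha_\bullet=0$, and about the specialised relation being the genuine minimal relation of the homogeneous model, only fills in details the paper leaves implicit.
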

	\begin{proof}
	By \Cref{theorem1} we are guaranteed that the spatially heterogeneous system may be written as set of polynomial relations, and, therefore, that we can express the system as a set of polynomial relations involving only observed variables and their derivatives. Thus, we can obtain a set of identifiable parameter combinations, $\mathcal{Q}$, from the set of polynomial coefficients. Next, we can recover the set of identifiable parameter combinations in the spatially homogeneous model, $\mathcal{Q}^*$, by considering that all terms involving spatial derivatives vanish. Thus, $\mathcal{Q}^* \subseteq \mathcal{Q}$ as required.
	\end{proof}
	
	As stated previously, we will also be able to write semi-linear systems as over-determined linear system of differential-algebraic equations, and consequentially \Cref{theorem2} will also hold for semi-linear systems. Furthermore, we propose that one cannot increase the set of identifiable parameter combinations by imposing additional constraints on the problem (for example, by prescribing a particular form for the initial condition). Thus, we also expect \Cref{theorem2} to hold for systems with non-linear source terms by considering identifiability of systems with no-flux boundary conditions and homogeneous initial conditions, which reduce to the corresponding set of ODEs.

\subsection{Semi-linear systems}\label{semi-linear-systems}

	We now briefly present and discuss analysis of two specific semi-linear systems: a two-state model of bacteria migration due to chemotaxis, and a three-state cell-cycle model subject to logistic growth. 

	\subsubsection{Two-state model of bacteria chemotaxis}

	First, we consider a model of bacteria chemotaxis. Bacteria, $\rho(x,t)$, secrete a chemotactic factor, $c(x,t)$, at rate $k > 0$. Both bacteria and the chemotactic factor diffuse, however bacteria additionally undergo directed motion due to the spatial gradients in the chemotactic factor, with strength and direction given by $\chi$ ($\chi > 0$ and $\chi < 0$ correspond to negative and regular chemotaxis, respectively). The chemotactic factor degrades at constant rate $\alpha > 0$. The dynamics are described by the non-linear system \cite{Murray:2002}
	\begin{subequations}\label{pde_chemotaxis}
	\begin{align*}
		\rho^{(0,1)} &= D_\rho \rho^{(2,0)} + \overbrace{\chi(\rho c^{(0,1)})^{(0,1)}\vphantom{\Big(}}^{\text{Chemotaxis}},\\
		c^{(0,1)} &= D_c c^{(2,0)} + k\rho - \alpha c,
	\end{align*}
	\end{subequations}
	for $t > 0$ and $x \in [0,L]$. We consider a realistic experimental setup where only information relating to the density of bacteria is available (i.e., the concentration of the chemotactic factor is not observed). The experiment is initiated without the chemotactic factor, $c(x,0) = 0$, and the field of view is such that no-flux boundary conditions are appropriate. We consider the system to be semi-linear as it is linear the unobserved quantity, $c$.
	
	Results in \Cref{theorem1} indicate that, after eliminating $c^{(2,0)}$ from one equation, we can expand the system to include derivatives up to fourth order to ensure that the system is fully determined. Performing row reduction on the expanded system yields a polynomial expression containing 212 coefficients, which determine the identifiable parameter combinations (full calculations available as supplementary code) as
		\begin{equation}
			\left\{D_\rho,D_c,k,\alpha \chi \right\}.
		\end{equation}
	Thus, $D_\rho$, $D_c$, and $k$ are structurally identifiable as is the product $\alpha\chi$, while the individual constituents $\alpha$ and $\chi$ are not. This analysis agrees with results that one would arrive through the scaling $\tilde{c} = \alpha c$ to eliminate the disparate appearance of $\alpha$ and $\chi$ in \cref{pde_chemotaxis}.
	
	A key difference between analysis of the chemotaxis model and the general linear model considered previously is the role of the initial condition. While it may be true that $\alpha$ and $\chi$ would become individually identifiable if information relating to a non-homogeneous non-zero initial condition were known, the initial condition $c(x,0)$ conveys no information about the parameters. Indeed, we can demonstrate that $\alpha$ and $\chi$ can compensate for each other without varying the initial condition (\cref{fig3}).

	\begin{figure}
		\centering
		\includegraphics{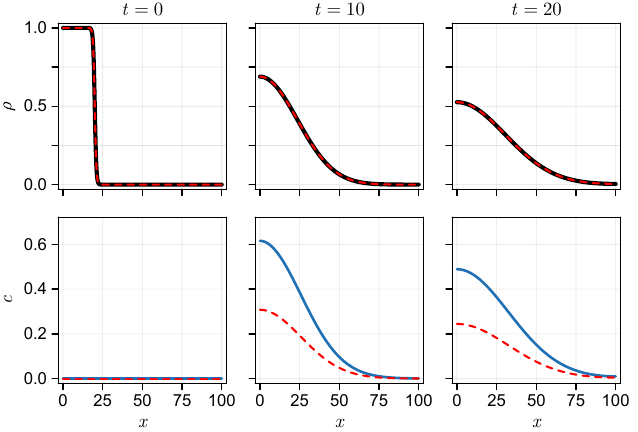}
		\caption[Fig 3]{\textbf{Non-identifiability of the chemotaxis model.} Solutions show the observed variable $\rho(x,t)$ (top row), and unobserved variable $c(x,t)$, at the original parameter set (solid curves) and modified parameter set where $\chi \mapsto 2\chi$ (red dashed). Original parameters are $D_\rho = 20$, $D_c = 100$, $\chi = k = \alpha = 1$. Note that the choice $\chi > 0$ prescribes \textit{negative chemotaxis}: bacteria move down the chemotactic gradient.}
		\label{fig3}	
	\end{figure}

\subsubsection{Three-state cell-cycle model of cell migration subject to logistic growth}

	Finally, we consider a three-state cell cycle model of cell migration that extends on that presented in \cref{two-state-cell-cycle} by capturing the intermediate stage where cells appear to fluoresce yellow in FUCCI assays, and where contact inhibition prevents cells in G2 from undergoing mitosis \cite{Vittadello:2018,Donker.2022}. The model is given by
	\begin{subequations}\label{fucci3}
	\begin{align}
		r^{(0,1)} &= D_1 r^{(2,0)} - \lambda_1 r + 2\lambda_3g(1 - n/K),\label{fucci3_r}\\
		y^{(0,1)} &= D_2 y^{(2,0)} - \lambda_2 y + \lambda_1 r,\label{fucci3_y}\\
		g^{(0,1)} &= D_3 g^{(2,0)} - \lambda_3 g(1 - n/K) + \lambda_2 y.\label{fucci3_g}
	\end{align}
	\end{subequations}
	Here, the variable $y$ corresponds to cells that fluoresce both red and green and hence appear to fluoresce yellow. We consider that FUCCI scratch assay experiments are conducted (\cref{fig1}b), but that we do not attempt to distinguish the period of fluorescent overlap so that only measurements relating to $r$ and $g$ are available.
	
	While this problem seems more complex---both due to the non-linearity introduced by the logistic growth term and the additional state variable---it is relatively straightforward to eliminate $y$ and its derivatives without expanding the system through differentiation. To do this, we solve \cref{fucci3_y} explicitly for $y$ to obtain
		\begin{equation*}
			y = \dfrac{-\lambda_3 (g^{(0,0)})^2 + \lambda_3 g^{(0,0)} (K - r^{(0,0)}) + K \big(g^{(0,1)} - D_3 g^{(2,0)}\big)}{K \lambda_2 + \lambda_3 g^{(0,0)}},
		\end{equation*}
	and then substitute the resultant expression directly into \cref{fucci3_r,fucci3_y}. This approach is advantageous in that it can be applied to \textit{any} non-linear partial differential equation model in which we can solve for the unobserved quantities in terms of the observed quantities and respective derivatives; we could, for example, apply such a technique to a model where growth inhibition is modelled through alternative growth models such as Gompertz or Richards (for models with non-polynomial terms the orthogonality of terms will also need to be established to construct the set of coefficients). Expanding the resultant system reveals that all parameters are structurally identifiable, in agreement with the result in \Cref{theorem2}, where we find that the corresponding ODE model is also structurally identifiable according to the online tool SIAN \cite{Hong:2019uq,Hong.2020}.

\section{Discussion and Conclusion}\label{conclusion}

Many processes are inherently spatial, and not well described by ODE models. Yet, tools for assessing the identifiability of the PDE models that are increasingly used to interpret spatial data are almost entirely undeveloped. In this brief article, we illustrate how the well-established differential algebra approach to structural identifiability of ODE models can be transferred to analyse a large class of spatial models. 

While we demonstrate that the differential algebra approach can be applied to any system of linear RAD equations, we have restricted our demonstration to two- and three-state systems in one spatial dimension. Results in \Cref{theorem1} show that the derivative order required grows rapidly as the number of state variables increases; we expect this to be exacerbated for PDE models with more than one spatial dimension. This presents not only computational difficulties, but practical issues with interpretation: the resultant set of polynomial coefficients will be verbous and potentially complicated, making it difficult to establish a reduced set of identifiable parameter combinations. We see this even for the relatively simple semi-linear chemotaxis model (supplementary code). Such computational issues are a well-established shortcoming of the differential algebra approach, even for ODE models \cite{Raue:2014}. While this bottleneck presents a clear shortcoming, many PDE models used in practice only possess a small number of state variables: perhaps more so than for ODE models, where large systems are common \cite{Karlsson.2012}. Common systems of PDE models that we can analyse include models of chemotaxis \cite{Murray:2002}, Turing patterning and morphogenesis \cite{Turing:1952}, age-structured epidemic models \cite{Renardy.2022}, systems of heat and wave equations, and many more. For some systems, one approach to deal with larger numbers of state variables is through a transformation that decouples states \cite{Sun.1999de}; identifiability can then be established for species sequentially. 

Another approach to alleviate the computational cost is to develop more efficient algorithmic implementations of state-variable elimination that do not involve expanding to a fully determined system. In the ODE literature, Ritt's algorithm is commonly employed alongside a Gr\"{o}bner basis factorisation to eliminate state variables automatically \cite{Bellu:2007,Meshkat.2012}. An advantage of eventual implementation of these algorithms for analysis of PDE models is their application to non-linear problems: computing the Gr\"{o}bner basis is equivalent to Gaussian elimination for linear systems of equations, but can be applied generally to any polynomial system. While it remains unclear whether this can, given infinite computational time, always be applied to PDE models (as we have shown is the case for linear systems), such software will enable analysis of a much larger class of PDE models that can be written as polynomial in both state variables and their corresponding derivatives \cite{Rahkooy:2024}. 

A unique feature in the application of PDE models to interpret data is the mode of measurements that may be collected. It is not always the case that experimental observations correspond to measurements of a spatiotemporal function: rather, observations could comprise scaler measurements at a single point in space (for example, measurements of temperature by a probe in a heat conduction experiment \cite{McInerney.2019rs}) or scalar measurements of a spatial average (i.e. of total cell count). Summary-statistic type measurements can be even more complicated for models with two or more spatial dimensions \cite{Browning.2021}. The scratch assay experiment is an example of this, as data collected from the full, two-dimensional, process often comprises spatial averages in the direction parallel to the scratch \cite{Johnston:2015}. This example is trivial as, provided appropriate constraints are placed on the initial condition, the extraneous spatial dimension can be integrated out of the full two-dimensional model to yield the one-dimensional model that we analyse. It is entirely unclear whether it is possible, in the general case, to reduce a system of partial differential equations to a polynomial system involving an observation function that does not include space. 

This question of lower-dimensional observation functions is particularly relevant to structural identifiability analysis of stochastic differential equation models \cite{Browning.2020}. It is unclear whether such equations can be reduced to eliminate unobserved states, since the underlying stochastic process is typically not differentiable. One approach is to analyse the corresponding system of Fokker-Plank equations, a PDE in as many spatial dimensions as SDE state variables. Partially observing the system, therefore, corresponds to observing marginals of the Fokker-Plank equation solution; in effect, integrating over unobserved spatial variables.

PDE models are widely used to characterise spatial processes and interpret spatial biological data. It is, therefore, paramount that effective and efficient tools to assess the structural identifiability of these models be developed. We build on existing work \cite{Renardy.2022,Ciocanel.2023} to show that the differential algebra approach to structural identifiability analysis can be applied to all linear RAD PDE models, and some classes of non-linear PDE models. We demonstrate the interdependence between structural identifiability and the initial conditions in partially observed models, highlighting the importance of assessing the structural identifiability before attempting to infer parameters in PDE models from experimental data.

\section*{Data availability}
	Mathematica and Julia code used to perform the symbolic and numerical computations, respectively, available on GitHub at \url{https://github.com/ap-browning/pde_structural_identifiability}.

\section*{Acknowledgements}
	APB thanks the Mathematical Institute for a Hooke Fellowship. MT and APB thank the EPSRC and Mathematical Institute for funding a vacation internship that funded research conducted by MT. CF acknowledges support of a fellowship from ``la Caixa'' Foundation (ID 100010434) with code LCF/BQ/EU21/11890128. REB is supported by a grant from the Simons Foundation (MP-SIP-00001828).

\appendix
\clearpage
\refereecomment{R2.3}{algorithm}
\section{Algorithm of implementation}\label{appalg}

In \cref{alg1} we provide an algorithm that outlines our implementation of the differential algebra approach to assess the structural identifiability of $m$-state RAD PDE subject to $\ell$ linearly independent observations. An implementation of the analysis in \texttt{Mathematica} \cite{Mathematica} is available as supplementary material on GitHub (\url{https://github.com/ap-browning/pde_structural_identifiability}).

\vfill
\begin{algorithm}[H]
\caption{Differential algebra framework for linear and semi-linear RAD PDE models}
\label{alg1}
\begin{algorithmic}[1]
	\item Given a system of $m$ equations, denoted $\mathcal{Y}$, and a set of $\ell < m$ linearly-independent observed quantities. 
	\item Rewrite the system $\mathcal{Y}$ in terms of only the $\ell$ observed quantities and a remaining set of $k = m - \ell$ unobserved quantities that are to be eliminated. The rewritten system of $m$ equations is denoted by $\mathcal{Y}_1$. 
	\item Reduce $\mathcal{Y}_1$ such that the first equation is first-order in the unobserved quantities. The reduced system is denoted by $\mathcal{Y}_2$. This can be done by, for example, solving the last $m-1$ equations simultaneously for the second-order spatial derivatives of the unobserved quantities and substituting the resultant expressions into the first equation.
	\item Apply \Cref{theorem1} to determine that the required order of the expanded system is at most $n^* = 4(m - 1)$. 
	\item Expand $\mathcal{Y}_2$ up to order $n^*$ by taking all possible order $n^*-1$ partial derivatives of the first equation, and all possible order $n^*-2$ partial derivatives of the remaining equations. The expanded system, denoted by $\mathcal{Y}_3$ will comprise $\tilde{m}$ equations, linear in a total of $\tilde{n} < \tilde{m}$ partial derivatives of the unobserved variables. 
	\item Write the expanded system as the linear system $\mathbf{A}\mathbf{v}^{(n^*)} = \mathbf{b}$, where $\mathbf{A} \in \mathbb{R}^{(\tilde{m},\tilde{n})}$ and $\mathbf{b} \in \mathbb{R}^{(\tilde{m})}$.
	\item Perform Gaussian elimination to reduce the symbolic augmented matrix $(\mathbf{A} | \mathbf{b})$ into row-echelon form, denoted $(\mathbf{A}_\text{re}|\mathbf{b}_\text{re})$.
	\item The set of polynomial equations, denoted by $\mathcal{R}$, is given by the non-zero elements of $\mathbf{b}_\text{re}$ corresponding to rows of $\mathbf{A}_\text{re}$ that are identically zero. Elements of $\mathcal{R}$ are polynomial in the set of partial derivatives of the observed $\ell$ quantities and are also considered to be observed.
	\item Normalise each polynomial to ensure uniqueness by dividing through by a chosen non-zero coefficient. The set of monic polynomials is denoted by $\mathcal{R}_1$.
	\item The set of identifiable parameter combinations, $\mathcal{Q}$, is given by union of the the sets of polynomial coefficients of each element of $\mathcal{R}_1$. Coefficients that do not depend on the unknown quantities should be removed before further reduction.
\end{algorithmic}
\end{algorithm}
\vfill

{\footnotesize
\clearpage

}
 
\end{document}